\documentclass[journal,onecolumn]{IEEEtran}
\usepackage{setspace}
\usepackage{amsmath,amssymb}
\usepackage{graphicx}
\usepackage{booktabs}
\usepackage{cite}
\usepackage{url}
\usepackage{balance}

\newtheorem{proposition}{Proposition}
\newcommand{\E}{\mathbb{E}}
\newcommand{\Prb}{\mathbb{P}}
\newcommand{\pos}[1]{\left[#1\right]^+}

\graphicspath{{figures/}}

\title{Contact-Limited Throughput of a Buoyless Acoustic-to-LEO Gateway With Anticipatory Preparation}

\author{Muhammad Khalil%
\thanks{Muhammad Khalil is with the School of Engineering, RMIT University, Melbourne, VIC, Australia (e-mail: muhammad.khalil@rmit.edu.au).}}

\begin{document}
\maketitle

\begin{abstract}
A buoyless acoustic-to-LEO gateway with predictable contacts is analyzed. Advanceable cross-medium preparation competes with acoustic collection, while residual RF acquisition consumes contact time. Exact fluid and packet service, a fluid-optimal fixed preparation lead, tandem-queue stability, and one-contact reliability are derived. An explicit service-bias identity shows why treating acquisition as advanceable can select an insufficient lead. For lognormal preparation, conditions are characterized under which variability improves mean service while degrading reliability. Monte Carlo and packet-queue simulations corroborate the analysis with synthetic parameters; in a diagnostic case, accounting for residual RF acquisition raises the actual sustainable rate from 8.776 to 9.079 kbit/s.

\end{abstract}

\begin{IEEEkeywords}
Underwater acoustic networks, aerial--aquatic gateway, LEO satellite, intermittent contact, queue stability.
\end{IEEEkeywords}

\section{Introduction}
\IEEEPARstart{U}{nderwater} acoustic networks face limited bandwidth,
propagation delay, and channel variability \cite{stojanovic}. Surface
gateways connect them to above-water infrastructure \cite{ibrahim}, while
hybrid aerial--underwater vehicles (HAUVs) provide a mobile alternative
with demonstrated cross-medium operation \cite{jin,zhen}. Related paths
include BeiDou--acoustic experiments \cite{zheng}, mobile sea--air
gateways \cite{bao}, and autonomous ARGOS pop-up buoys
\cite{carandell}. Relay--aided satellite-acoustic systems have also been
analyzed through outage probability and average bit-error rate
\cite{yang2024}.

Pass anticipation is established: Arvor floats use ephemerides to synchronize surfacing/transmission with predicted Argos passes \cite{andre}. Finite service visits, setup/vacation queues, asynchronous satellite opportunities \cite{alfa,niu,carrhajek}, and contact-plan routing \cite{araniti} are also established. These results provide the basis for the contact-limited service model considered here.

The main design question is how much acoustic collection time should be allocated to pre-contact preparation when only part of the readiness process can precede RF access. A fixed preparation lead is used to separate advanceable preparation from residual RF acquisition. An exact bias identity then quantifies the error caused by treating both components as advanceable. The resulting service law is used to obtain the fluid-optimal lead, the reliability-constrained lead, exact whole-packet service, and the corresponding tandem-queue stability condition. For lognormal preparation time, analytical conditions identify when increased variability improves mean service while reducing reliability. Numerical comparisons evaluate both designs under the same acquisition-constrained service law. The lead uses a supplied contact plan and readiness statistics; trajectory and dynamic surfacing control are outside the scope.

\section{Alternating Gateway With Anticipatory Preparation}
\subsection{Timing and Service Model}
Consider aggregate underwater sources and one HAUV, each with an infinite buffer. A nominal cycle contains a pre-contact interval $U$, an allocated LEO visibility interval $w$, and a recovery/re-entry reservation $Y$:
\begin{equation}
T=U+w+Y. \label{eq:T}
\end{equation}
Successive windows can belong to different satellites, and $T$ denotes the externally supplied service-cycle period. Fig.~\ref{fig:system} summarizes the cross-medium architecture and timing convention.

\begin{figure*}[t]
\centering
\includegraphics[width=0.98\textwidth]{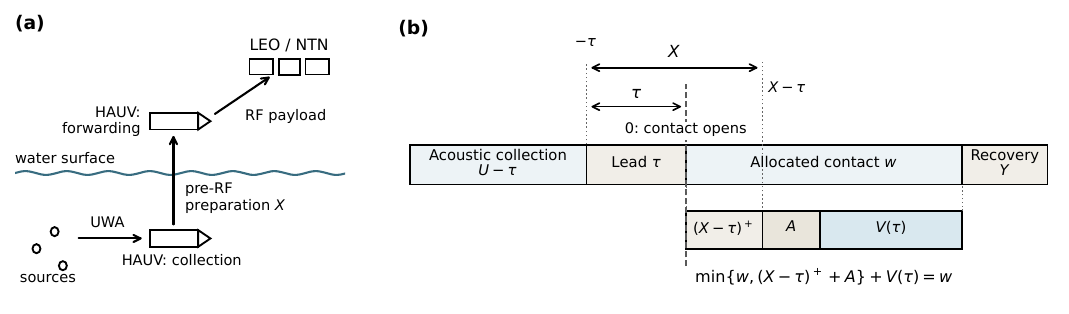}
\caption{System and timing model. (a) The same HAUV collects underwater acoustic (UWA) data and forwards it over LEO/NTN after transition. (b) Preparation starts at $-\tau$; the example has $X>\tau$ and $R(\tau)<w$. Acquisition begins after both contact opening and preparation completion. Unfinished readiness is truncated at closure.}
\label{fig:system}
\end{figure*}

Let $\tau\in[0,U]$ be a fixed pre-contact preparation lead.
Acoustic reception occupies $U-\tau$ seconds and then stops as advanceable preparation begins $\tau$ seconds before the contact opening.
Let $X_k\ge0$ denote advanceable pre-RF preparation that is incompatible with continued acoustic collection, such as water exit and platform/antenna readiness, and let $A_k\ge0$ denote the residual over-the-air acquisition/access time. We assume the same minimum-elevation mask for acquisition and payload, with no acquisition occurring below that mask. Thus, $A_k$ starts only after both contact opening and the completion of $X_k$; early-completed readiness is maintained until opening. The in-contact readiness overhead is
\begin{equation}
R_k(\tau)=A_k+(X_k-\tau)^+,
\end{equation}
and the usable satellite interval is
\begin{equation}
\boxed{V_k(\tau)=\pos{w-R_k(\tau)}.}
\label{eq:Vtau}
\end{equation}
Lead can absorb $X_k$ but cannot remove $A_k$. We take $(X_k,A_k)$ as exogenous to $\tau$, allowing for dependence within each pair. Attempts unfinished at closure are abandoned, and safe abort/re-entry must fit within $Y$; otherwise, the fixed-cycle model does not apply. A start-time-dependent acquisition $A_k(\tau)$ requires a revised service law; Proposition~1 still applies if its induced mean service is continuous and strictly increasing.

Let $C_U,C_N>0$ be constant effective payload rates during active service, with protocol costs absorbed into these rates. The fluid service opportunities are
\begin{equation}
S_U(\tau)=C_U(U-\tau),\qquad
S_{N,k}(\tau)=C_NV_k(\tau).
\label{eq:fluidservice}
\end{equation}
Define $H(\tau)=\E[V(\tau)]$. For lead-independent $(X,A)$, $V(\tau)$ is sample-wise nondecreasing and 1-Lipschitz, so $H$ is nondecreasing and continuous for any readiness distribution.

For compact closed-form sensitivity and numerical evaluation, we later use a fixed visibility-dependent acquisition reservation $A=a<w$ and a random $X$. With $\bar w=w-a$ and
\begin{equation}
G_X(x)=\E[(x-X)^+]=\int_0^x F_X(t)\,dt, \qquad x\ge0,
\label{eq:G}
\end{equation}
\eqref{eq:Vtau} gives
\begin{equation}
H(\tau)=G_X(\tau+\bar w)-G_X(\tau),
\qquad
p_0(\tau)=1-F_X(\tau+\bar w),
\label{eq:Hdet}
\end{equation}
for continuous $X$, where $p_0=\Prb\{V=0\}$. The reservation $a$ is an illustrative acquisition parameter. Random $A$ is covered by \eqref{eq:Vtau} using joint samples with $X$.

\subsection{Fluid-Optimal Preparation Lead}
Under the exogenous i.i.d. cycle model used below, the sustainable fluid-rate supremum is:
\begin{equation}
C_f(\tau)=\frac{1}{T}\min\{C_U(U-\tau),\,C_NH(\tau)\}.
\label{eq:Cftau}
\end{equation}

\begin{proposition}
Assume $H(\tau)$ is strictly increasing on $[0,U]$ and $H(U)>0$. If $C_UU\le C_NH(0)$, then $\tau_f^\star=0$ uniquely maximizes \eqref{eq:Cftau}. Otherwise, there is a unique $\tau_f^\star\in(0,U)$ satisfying
\begin{equation}
\boxed{C_U(U-\tau_f^\star)=C_NH(\tau_f^\star),}
\label{eq:taustar}
\end{equation}
and it uniquely maximizes $C_f(\tau)$.
\end{proposition}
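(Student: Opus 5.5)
The argument is a standard max--min crossing argument. Set $f(\tau)=C_U(U-\tau)$ and $g(\tau)=C_NH(\tau)$ on $[0,U]$, so that $T\,C_f(\tau)=\min\{f(\tau),g(\tau)\}$. Since $C_U>0$, $f$ is continuous and strictly decreasing. The continuity of $H$ was established above for lead-independent $(X,A)$, and strict monotonicity is assumed, so $g$ is continuous and strictly increasing. I would therefore study the gap $d(\tau)=f(\tau)-g(\tau)$, which is continuous and strictly decreasing on $[0,U]$. Its endpoint values are $d(0)=C_UU-C_NH(0)$ and $d(U)=-C_NH(U)<0$, the latter using $H(U)>0$.

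\emph{Case $C_UU\le C_NH(0)$.} Here $d(0)\le 0$, and strict decrease gives $d(\tau)<0$ for every $\tau>0$. The minimum is therefore $f(\tau)$ on all of $[0,U]$. Because $f$ is strictly decreasing, $C_f(\tau)<C_f(0)$ for $\tau>0$, so $\tau_f^\star=0$ is the unique maximizer.

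\emph{Case $C_UU>C_NH(0)$.} Now $d(0)>0>d(U)$. The intermediate value theorem gives a root $\tau_f^\star\in(0,U)$ of \eqref{eq:taustar}, and strict monotonicity of $d$ makes it unique. To show this root is the unique maximizer:
\begin{itemize}
\item For $\tau<\tau_f^\star$, $d(\tau)>0$, so the minimum equals $g(\tau)<g(\tau_f^\star)$.
\item For $\tau>\tau_f^\star$, $d(\tau)<0$, so the minimum equals $f(\tau)<f(\tau_f^\star)$.
\end{itemize}
At $\tau_f^\star$ the common value is $f(\tau_f^\star)=g(\tau_f^\star)$, so every other $\tau$ gives a strictly smaller $C_f$.

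The only delicate point is securing continuity of $H$, since the intermediate value step depends on it. I would rely on the earlier sample-wise 1-Lipschitz observation: it gives $|H(\tau)-H(\tau')|\le|\tau-\tau'|$ by taking expectations. If $A$ were lead-dependent, continuity would instead have to be assumed, which is the caveat the model section already flags. Strict monotonicity is part of the hypothesis. The endpoint condition $H(U)>0$ is what guarantees a sign change when $d(0)>0$, and with it the interior location of $\tau_f^\star$.
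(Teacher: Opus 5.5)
Your proposal is correct and follows the same crossing argument as the paper's proof: a strictly decreasing acoustic term against a strictly increasing satellite term, a sign change of their difference when $C_UU>C_NH(0)$, and a unique crossing at which the minimum switches from increasing to decreasing. You write out in full the steps the paper leaves terse, namely the endpoint values $d(0)$ and $d(U)=-C_NH(U)<0$, the intermediate value step, and the continuity of $H$ obtained from the sample-wise 1-Lipschitz property.
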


\begin{IEEEproof}
The acoustic term strictly decreases and the satellite term strictly increases. If the acoustic term is no larger at zero, positive lead lowers the minimum. Otherwise their difference changes sign between $0$ and $U$; continuity and strict monotonicity give one crossing. Thus, the minimum increases up to the crossing
and decreases thereafter.
\end{IEEEproof}

For $A=a$ and continuous $X$, $H'(\tau)=F_X(\tau+\bar w)-F_X(\tau)>0$ whenever the CDF has positive mass over each interval $(\tau,\tau+\bar w]$, the proposition applies to the lognormal case below.

\subsection{Bias From Advancing RF Acquisition}
Consider the diagnostic approximation
\begin{equation}
\widetilde V(\tau)=\pos{w-(X+a-\tau)^+},\qquad
\widetilde H(\tau)=\E[\widetilde V(\tau)],
\label{eq:approx}
\end{equation}
which treats the entire readiness time as advanceable before RF access. Since
$\widetilde H(\tau)=G_X(\tau+w-a)-G_X((\tau-a)^+)$,
\begin{equation}
\boxed{\widetilde H(\tau)-H(\tau)
=\int_{(\tau-a)^+}^{\tau}F_X(t)\,dt
\le aF_X(\tau).}
\label{eq:bias}
\end{equation}
The bias is nonnegative and vanishes when the preparation cannot be completed before opening. Let $\widetilde\tau_f^\star$ maximize \eqref{eq:Cftau} with $H$ replaced by $\widetilde H$. Under Proposition~1, the crossing argument provides $\widetilde\tau_f^\star\le\tau_f^\star$: the approximation can select insufficient leads. Both lead selections are therefore evaluated using the actual service rate $C_f$ in \eqref{eq:Cftau}.

\section{Packet Service, Stability, and Reliability}
With fixed serialization times $L/C_U$ and $L/C_N$, the whole-packet service is as follows:
\begin{equation}
K_U(\tau)=\left\lfloor\frac{C_U(U-\tau)}{L}\right\rfloor,
\quad
K_{N,k}(\tau)=\left\lfloor\frac{C_NV_k(\tau)}{L}\right\rfloor.
\label{eq:K}
\end{equation}
Let $F_{R_\tau}$ denote the CDF of $R(\tau)$. The tail-sum identity gives the exact satellite mean
\begin{equation}
\boxed{
\E[K_N(\tau)]
=\sum_{j=1}^{\lfloor C_Nw/L\rfloor}
F_{R_\tau}\!\left(w-\frac{jL}{C_N}\right).
}
\label{eq:tailtau}
\end{equation}
Sample-wise packetization also yields
\begin{equation}
0\le C_NH(\tau)-L\E[K_N(\tau)]<L.
\label{eq:gap}
\end{equation}
For $A=a$, \eqref{eq:tailtau} reduces to a sum of $F_X(\tau+\bar w-jL/C_N)$ over $j\le\lfloor C_N\bar w/L\rfloor$.

Let $N_k$ packets arrive at the source queue at the cycle start, with $\E[N_k]=\lambda T/L$. With acoustic transfer before the satellite contact,
\begin{align}
B_k&=\min\{Q_{U,k}+N_k,K_U(\tau)\},\nonumber\\
Q_{U,k+1}&=Q_{U,k}+N_k-B_k,\label{eq:queues}\\
Q_{H,k+1}&=(Q_{H,k}+B_k-K_{N,k}(\tau))^+.\nonumber
\end{align}

\begin{proposition}
If $\{N_k\}$ and $\{(X_k,A_k)\}$ are mutually independent i.i.d. sequences with finite first moments, and service is exogenous, lossless, and work-conserving, both queues admit a proper stationary backlog regime whenever
\begin{equation}
\lambda<C_{\rm pkt}(\tau),\qquad
C_{\rm pkt}(\tau)=\frac{L}{T}\min\{K_U(\tau),\E[K_N(\tau)]\}.
\label{eq:Cpkt}
\end{equation}
If $\lambda>C_{\rm pkt}(\tau)$, both queues cannot be stable. No general assertion is made at equality.
\end{proposition}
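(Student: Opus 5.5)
The plan is to treat the two queues in \eqref{eq:queues} separately and reduce each to a classical Lindley-type recursion.

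\textbf{Source queue.} The source queue $Q_U$ has deterministic per-cycle service capacity $K_U(\tau)$ and i.i.d. arrivals $N_k$. Its recursion is $Q_{U,k+1}=(Q_{U,k}+N_k-K_U(\tau))^+$, because $B_k=\min\{Q_{U,k}+N_k,K_U(\tau)\}$. This is a Lindley recursion with i.i.d. increments $N_k-K_U(\tau)$, whose mean is $\lambda T/L-K_U(\tau)$. By Loynes' theorem (equivalently, the strong law applied to the random walk of increments), a proper stationary regime exists when $\E[N_k]<K_U(\tau)$. Conversely, when $\E[N_k]>K_U(\tau)$ the backlog diverges almost surely.

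\textbf{HAUV queue.} The HAUV queue is fed by the departures $B_k$ of the first queue and served by $K_{N,k}(\tau)$, which is i.i.d. because $(X_k,A_k)$ is i.i.d. and exogenous. Two facts make the tandem tractable:
\begin{itemize}
\item the sequence $\{K_{N,k}\}$ is independent of $\{N_k\}$, hence of $\{B_k\}$;
\item the finite first moments give $\E[K_N(\tau)]\le C_Nw/L<\infty$.
\end{itemize}
Under the stability condition for $Q_U$, I would construct the stationary version of the pair via Loynes' backward construction, running the system from time $-\infty$ with empty initial queues. The driving sequence $(N_k,K_{N,k})$ is i.i.d., hence stationary and ergodic. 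The first queue then has a stationary ergodic output $B_k$ with mean $\E[B]=\E[N]$, since there is no loss and the queue is stable. The second recursion $Q_{H,k+1}=(Q_{H,k}+B_k-K_{N,k})^+$ is a Lindley recursion driven by the stationary ergodic sequence $B_k-K_{N,k}$. Loynes' theorem gives a finite stationary solution whenever $\E[B]<\E[K_N]$, that is, $\lambda T/L<\E[K_N(\tau)]$.

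\textbf{Sufficiency.} Combining the two conditions gives $\lambda<\frac{L}{T}\min\{K_U,\E[K_N]\}=C_{\rm pkt}(\tau)$, which is the claimed sufficient condition.

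\textbf{Converse.} Suppose $\lambda>C_{\rm pkt}(\tau)$. If $\lambda T/L>K_U$, then $Q_U$ grows linearly almost surely, because at most $K_U$ packets leave per cycle. Otherwise $\lambda T/L>\E[K_N]$. Total work conservation gives
\[
Q_{U,n}+Q_{H,n}\ge Q_{U,0}+Q_{H,0}+\sum_{k<n}N_k-\sum_{k<n}K_{N,k}.
\]
By the strong law the right-hand side grows like $n(\lambda T/L-\E[K_N])\to\infty$, so at least one queue is not stable. This yields the statement that the two queues cannot both be stable. Equality is excluded, since null-recurrent behavior may or may not occur depending on the distributions.

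\textbf{Main obstacle.} The hardest step is the stationarity of the tandem. The second queue's input $B_k$ is not i.i.d., so the classical GI/G/1 drift argument does not apply directly. I plan to handle this through the joint backward (Loynes) construction on the stationary ergodic driving sequence, and to check that the monotonicity in the initial condition holds for both recursions simultaneously. The properness of the limit for $Q_H$ then follows from the ergodic theorem applied to $B_k-K_{N,k}$, using $\E[B]=\E[N]$.
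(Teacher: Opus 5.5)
Your proposal is correct and follows the same route as the paper's proof: Loynes' theorem for the source queue, stationary ergodic departures with $\E[B_k]=\E[N_k]$, independence from the exogenous readiness making $B_k-K_{N,k}$ stationary and ergodic, Loynes' theorem again for the HAUV recursion, and failure of either mean-service bound ruling out joint stability. Your explicit Lindley form $Q_{U,k+1}=(Q_{U,k}+N_k-K_U(\tau))^+$ and the total-backlog inequality in the converse spell out steps the paper leaves implicit. One small correction: $\E[K_N(\tau)]\le C_Nw/L$ holds because $V_k(\tau)\le w$, not because of the finite-moment assumption.
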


\begin{IEEEproof}
Strict $\E[N_k]<K_U(\tau)$ stabilizes the first queue; its stationary departures are ergodic with $\E[B_k]=\E[N_k]$. Independence from exogenous readiness makes $B_k-K_{N,k}$ stationary and ergodic. Loynes' theorem \cite{loynes} stabilizes the second recursion when its mean is negative. Violation of either mean-service bound precludes joint stability.
\end{IEEEproof}

Equation \eqref{eq:Cpkt} specializes standard stability theory to the cross-medium service law. Packet-optimal and fluid-optimal leads can differ: $K_U(\tau)$ changes in integer steps, whereas $\E[K_N(\tau)]$ is generally continuous for continuous $X$. Stability here does not assert a finite mean backlog.

For a positive backlog $b$ already at the HAUV when preparation begins, $0<b\le C_Nw$,
\begin{equation}
p_c(b;\tau)=F_{R_\tau}\!\left(w-\frac{b}{C_N}\right),
\label{eq:pc}
\end{equation}
and $p_c=0$ for $b>C_Nw$. Thus, with the quantile definition $q_p(R_\tau)=\inf\{r:F_{R_\tau}(r)\ge p\}$, the largest fluid backlog meeting target $1-\varepsilon$ is
\begin{equation}
\boxed{b_\varepsilon(\tau)=C_N\pos{w-q_{1-\varepsilon}(R_\tau)}.}
\label{eq:beps}
\end{equation}
This metric gives the payload that can be completed within one satellite contact at the specified reliability. A zero value means that no positive backlog meets the target. For $A=a$ and continuous $X$,
\begin{equation}
b_\varepsilon(\tau)=C_N\pos{\bar w-(q_{1-\varepsilon}(X)-\tau)^+}.
\label{eq:bepsdet}
\end{equation}
For $0<b\le C_N\bar w$, the minimum lead that satisfies $p_c(b;\tau)\ge1-\varepsilon$ and the corresponding reliability-constrained fluid optimum are
\begin{equation}
\begin{aligned}
\tau_\varepsilon(b)&=\pos{q_{1-\varepsilon}(X)+\frac{b}{C_N}-\bar w},\\
\tau_{f,\varepsilon}^\star(b)&=\max\{\tau_f^\star,\tau_\varepsilon(b)\}.
\end{aligned}
\label{eq:taureliability}
\end{equation}
Equation~\eqref{eq:taureliability} applies when $\tau_\varepsilon(b)\le U$; otherwise, the requested one-contact reliability is infeasible under the fixed cycle. The second equality follows because Proposition~1 makes $C_f(\tau)$ increase up to $\tau_f^\star$ and decrease thereafter.

\subsection{Readiness Variability With In-Contact Acquisition}
For fixed $a$, let $X\sim\mathrm{LN}(\ln m_X-\sigma^2/2,\sigma^2)$ have a fixed mean $m_X$ and $\sigma=\sqrt{\ln(1+c_X^2)}>0$, where $c_X$ is the coefficient of variation. For $x>0$, $G_X(x)=x\Phi(z_x)-m_X\Phi(z_x-\sigma)$, where $z_x=(\ln x-\ln m_X+\sigma^2/2)/\sigma$; $\Phi$ and $\phi$ denote the standard normal CDF and density.

\begin{proposition}
For $\tau>0$,
\begin{equation}
\frac{\partial H(\tau)}{\partial\sigma}>0
\iff
m_X>e^{-\sigma^2/2}\sqrt{\tau(\tau+\bar w)},
\label{eq:varsmean}
\end{equation}
while at $\tau=0$ the derivative is strictly positive. Moreover,
\begin{equation}
\frac{\partial p_0(\tau)}{\partial\sigma}>0
\iff
m_X<(\tau+\bar w)e^{-\sigma^2/2}.
\label{eq:varsp0}
\end{equation}
For $0<\varepsilon<1/2$ and $0<b_\varepsilon(\tau)<C_N\bar w$,
\begin{equation}
\frac{\partial b_\varepsilon(\tau)}{\partial\sigma}<0
\iff
\sigma<\Phi^{-1}(1-\varepsilon).
\label{eq:varsbudget}
\end{equation}
Hence a nonempty region exists in which variability raises mean satellite service while worsening both missed-contact probability and reliable payload.
\end{proposition}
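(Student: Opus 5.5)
The plan is to compute every $\sigma$-derivative with $m_X$ held fixed, working through the standardized variable $z_x=\ell_x/\sigma$ with $\ell_x=\ln x-\ln m_X+\sigma^2/2$. Two facts drive everything. First,
\[
\frac{\partial z_x}{\partial\sigma}=\frac12-\frac{\ln x-\ln m_X}{\sigma^2}.
\]
Second, the lognormal density identity $m_X\phi(z_x-\sigma)=x\phi(z_x)$. It follows from $\phi(z-\sigma)=\phi(z)e^{z\sigma-\sigma^2/2}$ and $e^{z_x\sigma-\sigma^2/2}=x/m_X$.

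\emph{Mean service.} I differentiate $G_X(x)=x\Phi(z_x)-m_X\Phi(z_x-\sigma)$ in $\sigma$. Note that $\partial(z_x-\sigma)/\partial\sigma=\partial z_x/\partial\sigma-1$. By the identity, the two terms containing $\partial z_x/\partial\sigma$ cancel, which leaves
\[
\frac{\partial G_X(x)}{\partial\sigma}=m_X\phi(z_x-\sigma)=x\phi(z_x)>0 .
\]
From \eqref{eq:Hdet}, with $\bar w$ independent of $\sigma$,
\[
\frac{\partial H}{\partial\sigma}=(\tau+\bar w)\phi(z_{\tau+\bar w})-\tau\phi(z_\tau).
\]
At $\tau=0$ the second term is absent, since $G_X(0)=0$ for all $\sigma$, so the derivative is $\bar w\phi(z_{\bar w})>0$. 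For $\tau>0$, I take logarithms. Put $D=\ln\big((\tau+\bar w)/\tau\big)>0$. Positivity is then equivalent to $D>(z_{\tau+\bar w}^2-z_\tau^2)/2$. Since $\ell_{\tau+\bar w}-\ell_\tau=D$, this reads $D>D(\ell_{\tau+\bar w}+\ell_\tau)/(2\sigma^2)$. Dividing by $D$ gives $\ell_{\tau+\bar w}+\ell_\tau<2\sigma^2$, which rearranges to $\ln\sqrt{\tau(\tau+\bar w)}-\sigma^2/2<\ln m_X$. That is exactly \eqref{eq:varsmean}.

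\emph{Missed contact.} Here $p_0=1-\Phi(z_{\tau+\bar w})$, so
\[
\frac{\partial p_0}{\partial\sigma}=-\phi(z_{\tau+\bar w})\,\frac{\partial z_{\tau+\bar w}}{\partial\sigma}.
\]
This is positive iff $\ln(\tau+\bar w)-\ln m_X>\sigma^2/2$, which is \eqref{eq:varsp0}.

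\emph{Reliable payload.} Set $\zeta=\Phi^{-1}(1-\varepsilon)$, which is positive because $\varepsilon<1/2$. The quantile is $q=q_{1-\varepsilon}(X)=\exp(\ln m_X-\sigma^2/2+\sigma\zeta)$. The hypothesis $0<b_\varepsilon<C_N\bar w$ means that in \eqref{eq:bepsdet} both positive parts are inactive, i.e.\ $\tau<q<\tau+\bar w$. Locally, then, $b_\varepsilon=C_N(\bar w+\tau-q)$, and
\[
\frac{\partial b_\varepsilon}{\partial\sigma}=-C_Nq(\zeta-\sigma).
\]
This is negative iff $\sigma<\zeta$, giving \eqref{eq:varsbudget}.

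\emph{Nonempty region.} I expect the main obstacle to be exhibiting one parameter point where all three conditions and the interior condition for $b_\varepsilon$ hold together. The plan is as follows. Fix $\tau>0$ and $\bar w$. Choose $\sigma<\zeta$ small, so that $q\approx m_X e^{\sigma\zeta}$ is close to $m_X$. Then take $m_X$ inside the interval
\[
\big(e^{-\sigma^2/2}\sqrt{\tau(\tau+\bar w)},\ (\tau+\bar w)e^{-\sigma^2/2}\big),
\]
which is nonempty because $\sqrt{\tau(\tau+\bar w)}<\tau+\bar w$. For small $\sigma$ this interval approaches $(\sqrt{\tau(\tau+\bar w)},\tau+\bar w)\subset(\tau,\tau+\bar w)$. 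A point $m_X$ chosen strictly inside it, for example the midpoint, therefore also satisfies $\tau<q<\tau+\bar w$ once $\sigma$ is small enough. At that point all three strict inequalities hold, and by continuity they hold on an open neighbourhood, so the region is nonempty.
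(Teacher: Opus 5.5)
Your proposal is correct and follows the paper's proof: you use the identity $\partial G_X(x)/\partial\sigma=x\phi(z_x)$, which you justify via $m_X\phi(z_x-\sigma)=x\phi(z_x)$, then the Gaussian-density ratio for $\partial H/\partial\sigma$, direct differentiation of $p_0=1-\Phi(z_{\tau+\bar w})$, and differentiation of the interior branch $b_\varepsilon=C_N(\bar w+\tau-q)$. Your small-$\sigma$ construction, with $m_X$ strictly between $e^{-\sigma^2/2}\sqrt{\tau(\tau+\bar w)}$ and $(\tau+\bar w)e^{-\sigma^2/2}$, validly proves the nonempty-region claim, which the paper only asserts.
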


\begin{IEEEproof}
At fixed $m_X$, $\partial G_X(x)/\partial\sigma=x\phi(z_x)$. Therefore
$\partial H/\partial\sigma=(\tau+\bar w)\phi(z_{\tau+\bar w})-\tau\phi(z_\tau)$, whose Gaussian-density ratio gives \eqref{eq:varsmean}. Differentiating $p_0=1-\Phi(z_{\tau+\bar w})$ gives \eqref{eq:varsp0}. Finally, $q_{1-\varepsilon}(X)=m_X\exp[-\sigma^2/2+\sigma\Phi^{-1}(1-\varepsilon)]$; differentiating the interior branch of \eqref{eq:bepsdet} gives \eqref{eq:varsbudget}. Saturation boundaries are treated one-sidedly.
\end{IEEEproof}

\section{Numerical Evaluation}
For a circular overhead pass over a spherical Earth, neglecting Earth rotation,
\begin{equation}
\begin{aligned}
\psi_{\max}&=\cos^{-1}\!\left[\frac{R_E}{R_E+h}\cos\theta_{\min}\right]-\theta_{\min},\\
n&=\sqrt{\frac{\mu_E}{(R_E+h)^3}},\qquad w=\frac{2\psi_{\max}}{n}.
\end{aligned}
\label{eq:contact}
\end{equation}
Using $R_E=6378.137$ km, $\mu_E=398600.4418$ km$^3$/s$^2$, $h=600$ km, and $\theta_{\min}=30^\circ$ results in $w=247.141$ s. This value represents the geometric visibility duration for the assumed overhead pass and elevation mask. The service abstraction does not model the detailed NTN RF performance requirements in \cite{3gpp381015}.

Unless stated otherwise, $U=600$ s, $Y=15$ s, $T=862.141$ s, $C_U=20$ kbit/s, $C_N=100$ kbit/s, $L=8000$ bits, and $c_X=0.25$. To keep RF acquisition distinct from advanceable preparation, the reference uses the illustrative reservation $a=10$ s, giving $\bar w=237.141$ s, and sets $m_X=\chi\bar w$. $\bar w=237.141$ s, and sets $m_X=\chi\bar w$. The service parameters, including $a$, are synthetic and used for numerical evaluation.

\begin{figure*}[t]
\centering
\includegraphics[width=0.98\textwidth]{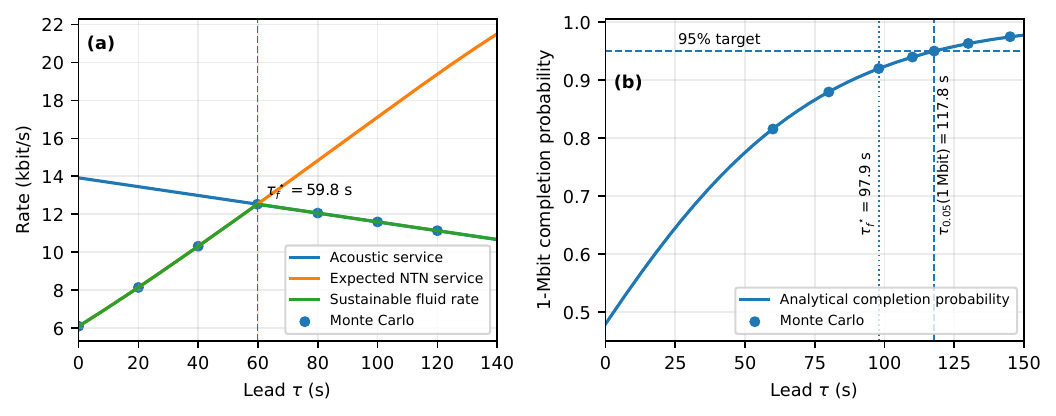}
\caption{Positive-lead performance and reliability. (a) Service tradeoff for $\chi=0.8$, $a=10$ s, and $c_X=0.25$; the analytical optimum is $\tau_f^\star=59.819$ s. (b) One-contact completion probability for a 1-Mbit HAUV backlog at $\chi=1$; the 95\% reliability lead $\tau_{0.05}(1\,\mathrm{Mbit})=117.788$ s exceeds the fluid optimum $97.909$ s. Markers use $5\times10^5$ independent lognormal samples per point.}
\label{fig:performance}
\end{figure*}

Table~\ref{tab:lead} reports the analytical fluid-optimal leads, while Fig.~\ref{fig:performance}(a) shows the full $\tau$ tradeoff at $\chi=0.8$. The lead increases the sustainable fluid rate from $6.100$ to $12.531$ kbit/s while reducing $p_0$ from $15.165\%$ to $2.601\%$. A $5\times10^5$-sample PCG64 Monte Carlo check at $\tau_f^\star$ gives $\E[V]=108.156\pm0.125$ s (95\% interval), versus the analytical $108.036$ s. The 95\%-reliable payload becomes $2.102$ Mbit. At $\chi=1$, the fluid-optimal lead still leaves $p_0=6.340\%$; indeed, $\tau_{0.05}(0^+)=107.788$ s exceeds $\tau_f^\star=97.909$ s. Fig.~\ref{fig:performance}(b) further shows that a 1-Mbit backlog reaches 95\% completion only at $\tau_{0.05}(1\,\mathrm{Mbit})=117.788$ s, so the reliability-constrained lead can be strictly larger than the mean-throughput optimum.

\begin{table}[t]
\caption{Acquisition-Aware Fluid-Optimal Lead ($a=10$ s)}
\label{tab:lead}
\centering
\scriptsize
\begin{tabular}{@{}ccccc@{}}
\toprule
$\chi$ & $\tau_f^\star$ [s] & $C_f(0)/C_f(\tau_f^\star)$ & $p_0(0)/p_0(\tau_f^\star)$ & $b_{0.05}(\tau_f^\star)$\\
& & [kbit/s] & [\%] & [Mbit]\\
\midrule
0.7 & 40.422 & 8.437/12.981 & 5.801/1.352 & 3.611\\
0.8 & 59.819 & 6.100/12.531 & 15.165/2.601 & 2.102\\
1.0 & 97.909 & 2.695/11.648 & 45.101/6.340 & 0\\
\bottomrule
\end{tabular}
\end{table}

Keeping $m_X=189.713$ s fixed, increasing $a$ from $0$ to $10$ and then to $20$ s changes $\tau_f^\star$ from $51.522$ to $59.819$ and then to $68.115$ s, and $C_f(\tau_f^\star)$ from $12.724$ to $12.531$ and then to $12.339$ kbit/s. At $\tau=59.819$ s, $F_X(\tau)=2.50\times10^{-6}$, early completion is rare, and \eqref{eq:bias} predicts little separation from the all-advanceable approximation at this reference optimum.

Fig.~\ref{fig:validation}(a) uses $C_N=40$ kbit/s, keeps the same $X$ distribution, and varies $a$ from $0$ to $40$ s. At $a=40$ s, $\widetilde\tau_f^\star=191.009$ s, and $\tau_f^\star=208.648$ s, evaluated under the actual service law, their rates are $8.776$ and $9.079$ kbit/s, respectively, indicating a $3.45\%$ gain. The corresponding packet thresholds are $8.7704$ and $9.0726$ kbit/s, so an $8.9$-kbit/s load is sustainable only under the acquisition-aware design. This comparison isolates the effect of modeling residual RF acquisition separately from advanceable preparation.

At the $\chi=0.8$, $C_N=100$-kbit/s fluid design point, $C_{\rm pkt}(\tau_f^\star)=12.5266$ kbit/s, within $L/T=0.00928$ kbit/s of $C_f$. Fig.~\ref{fig:validation}(b) uses $\tau=40$ s and $\lambda=11$ kbit/s: the source and satellite packet rates are $12.9909$ and $10.3108$ kbit/s, predicting HAUV drift of $0.68920$ kbit/s. Across 20 independent $200{,}000$-cycle runs, excluding the first $20{,}000$ cycles from drift estimation, the measured drift is $0.69054\pm0.00406$ kbit/s (95\% Student-$t$ half-width). For the same load, $\tau=59.819$ s places both stages below capacity; Fig.~\ref{fig:validation}(c) shows representative running mean backlogs after the same warm-up. The run-averaged HAUV backlog is $5.752\pm0.049$ Mbit (95\% interval across 20 runs).

\begin{figure*}[t]
\centering
\includegraphics[width=0.98\textwidth]{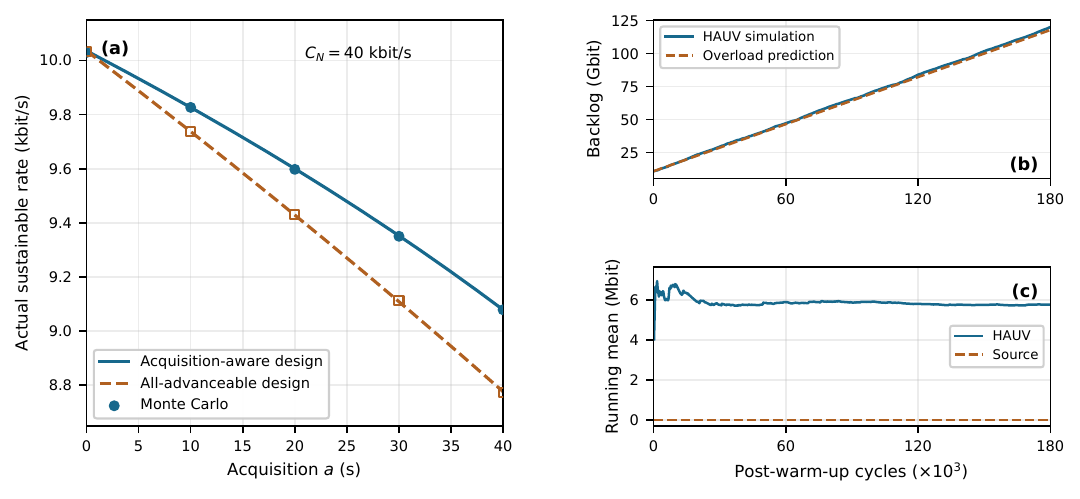}
\caption{Design and queue validation with $m_X=189.713$ s and $c_X=0.25$. (a) Actual rates under both lead designs, $C_N=40$ kbit/s; markers use $5\times10^5$ samples per design/point. (b) Overloaded HAUV queue and predicted drift, $C_N=100$ kbit/s, $a=10$ s, $\tau=40$ s, $\lambda=11$ kbit/s. (c) Running mean HAUV/source backlogs at $\tau=59.819$ s with the same load and rates, after warm-up.}
\label{fig:validation}
\end{figure*}

Queue arrivals are Poisson with a mean of $\lambda T/L$.
NumPy's \texttt{Generator} with \texttt{PCG64} uses
\texttt{SeedSequence}(20260905) as the root, with spawned
substreams for independent Monte Carlo points and queue runs.
The simulations agree with the analytical service predictions.
An ideal persistent surface gateway, with simultaneous acoustic
reception and negligible acquisition, has
$C_B=\min\{C_U,C_Nw/T\}=20$ kbit/s at $C_N=100$ kbit/s.
This value serves as an upper benchmark because the HAUV remains subject to the alternating acoustic duty cycle.

\section{Conclusion}
The analysis separates advanceable preparation from residual RF acquisition in the fixed-lead service model. The bias identity shows when treating all readiness as advanceable underestimates the required lead. The numerical results show the corresponding throughput loss under the actual service law. Reliability can require more lead than mean throughput, and lognormal variability can improve mean service while degrading reliability. Practical deployment requires measured readiness and goodput parameters, feasible contact and recovery schedules, and consideration of elevation-dependent service, Doppler acquisition, finite buffers, and vehicle endurance. The numerical results are conditional on the service model and parameters used here.

\end{document}